\documentclass[11pt]{article}
\usepackage[margin=1in]{geometry}
\usepackage{amsmath,amssymb,amsthm,mathtools}
\usepackage{microtype}
\usepackage[colorlinks=true,linkcolor=blue,citecolor=blue,urlcolor=blue]{hyperref}
\usepackage[nameinlink,capitalize]{cleveref}
\hypersetup{pdftitle={Sharp High-Entropy Bounds for Sums of Independent Discrete Random Variables},
pdfauthor={Haoran Wang},pdfkeywords={Shannon entropy, entropy power inequality, rearrangement, additive combinatorics}}
\numberwithin{equation}{section}
\newtheorem{theorem}{Theorem}[section]
\newtheorem{corollary}[theorem]{Corollary}
\newtheorem{lemma}[theorem]{Lemma}
\newcommand{\F}{\mathbb F}
\newcommand{\Z}{\mathbb Z}
\newcommand{\R}{\mathbb R}
\newcommand{\Prob}{\mathbb P}
\newcommand{\E}{\mathbb E}

\title{Sharp High-Entropy Bounds for Sums\\of Independent Discrete Random Variables}
\author{Haoran Wang\\\small Independent Researcher\\
\small\texttt{whr.hrwang@gmail.com}}
\date{}

\begin{document}
\maketitle

\begin{abstract}
Sharp high-entropy lower bounds for the entropy of a sum were known for identically distributed summands in torsion-free abelian groups and in prime cyclic groups. For arbitrary independent summands, Gavalakis, Goh and Kontoyiannis obtained an additive constant of $1/8$ and conjectured that the sharp constant is $1/2$. We prove that independent discrete random variables $X,Y$ with finite Shannon entropies in bits satisfy $H(X+Y)\ge (H(X)+H(Y))/2+1/2-o(1)$ in every torsion-free abelian group as $\max\{H(X),H(Y)\}\to\infty$. The same conclusion holds in the prime cyclic group $\F_p$ when both $\max\{H(X),H(Y)\}$ and $\log_2p-\max\{H(X),H(Y)\}$ tend to infinity. We give explicit error bounds in both settings. The proof extracts a component with paired point probabilities while controlling the entropy of the remainder independently of its support. Discrete rearrangement and uniform perturbation then transfer the continuous entropy power inequality to this component. In prime cyclic groups, an additional estimate controls the entropy lost under modular reduction. Binomial distributions show that the constant $1/2$ is optimal.
\end{abstract}

\section{Introduction}

Entropy inequalities for sums describe how uncertainty grows under addition of independent random variables. For discrete $X,Y$ in an abelian group, Shannon entropy satisfies $H(X+Y)\ge\max\{H(X),H(Y)\}$. We seek a stronger lower bound in terms of the two input entropies. All logarithms are to base $2$, so entropy is measured in bits.

The continuous entropy power inequality of Shannon and Stam provides the natural benchmark \cite{Shannon1948,Stam1959}. For independent real random variables with densities, it gives $2^{2h(X+Y)}\ge 2^{2h(X)}+2^{2h(Y)}$, where $h$ denotes differential entropy and the entropies are assumed finite. The arithmetic--geometric mean inequality therefore yields
\[
h(X+Y)\ge\frac{h(X)+h(Y)}2+\frac12.
\]
Independent Gaussians with equal variances attain equality. A positive additive constant cannot hold for discrete variables at every entropy scale, since two deterministic summands have a deterministic sum. The question is whether the same half-bit bound emerges as the input entropy grows.

We answer this question for arbitrary independent summands in torsion-free abelian groups and prime cyclic groups. An abelian group is torsion-free if $mg=0$, for a positive integer $m$, implies $g=0$; examples include $\Z^d$ and the additive group of $\R$. For prime $p$, the additive group of the field $\F_p$ is cyclic of order $p$. In this finite setting, the input entropy must also stay far enough below the uniform entropy $\log_2p$.

\subsection{Prior work}

Entropy versions of sumset inequalities connect this problem to additive combinatorics. Tao developed sumset and inverse-sumset theory for Shannon entropy and proved the high-entropy half-bit bound for independent identically distributed variables in torsion-free abelian groups \cite{Tao2010}. Haghighatshoar, Abbe and Telatar obtained universal entropy-growth bounds for integer-valued variables, including summands with different distributions \cite{HAT2014}. Their proof separates distributions with a large point probability from those with small point probabilities. This distinction also enters our argument, but their general bound does not reach the sharp constant $1/2$.

Two further developments supply the tools for our proof. Madiman, Wang and Woo established entropy inequalities through rearrangement in prime cyclic groups \cite{MWW2021}. Their comparison preserves the individual entropies and lowers the sum entropy when one summand has a suitable paired form. They also decompose general distributions into regular components. A different approach adds continuous noise and applies the continuous entropy power inequality: Nekouei, Skoglund and Johansson used this method for discrete variables \cite{Nekouei2019}, and Gavalakis obtained quantitative entropy monotonicity for log-concave sums using uniform perturbations \cite{Gavalakis2024}. The mixture-entropy inequality of Melbourne, Talukdar, Bhaban, Madiman and Salapaka relates the effect of such perturbations to total variation distance \cite{Melbourne2022}. We combine these tools with a regularization whose discarded entropy is controlled by its mass alone.

The ambient group also matters. Jog and Anantharam studied entropy-power inequalities for groups of order $2^n$ \cite{JogAnantharam2014}, while Gavalakis, Kontoyiannis, Sriramu and Wagner recently proved equality and stability results on compact abelian groups \cite{GKSW2026}. Proper subgroups obstruct a universal positive entropy increment. In a prime cyclic group this obstruction is absent, but concentration near the uniform distribution still prevents a half-bit gain.

Gavalakis, Goh and Kontoyiannis proved the prime-field counterpart of Tao's theorem and then treated arbitrary independent summands over $\R$ and $\F_p$ \cite{GGK2026}. Under high-entropy assumptions, together with a deficit from uniform entropy in the prime-field case, they obtained
\[
H(X+Y)\ge\frac{H(X)+H(Y)}2+\frac18-\varepsilon.
\]
Their comparison with identically distributed summands loses a factor of four. They conjectured that the sharp constant is $1/2$.

\subsection{Our results}

For $0\le t\le1$, let $h_2(t)=-t\log_2t-(1-t)\log_2(1-t)$ denote the binary entropy function, with $0\log_2 0=0$.

\begin{theorem}\label{thm:main}
Let $X,Y$ be independent discrete random variables with finite entropies, and put $M=\max\{H(X),H(Y)\}>1$.
If $X,Y$ take values in a torsion-free abelian group, then
\begin{equation}\label{eq:main-tf}
\begin{aligned}
H(X+Y)\ge{}&\frac{H(X)+H(Y)}2+\frac12
-\frac{3\log_2e}{4(M-1)}\\
&-\frac34\log_2\left(1+2^{-(M-1)/2}\right).
\end{aligned}
\end{equation}
If $X,Y$ take values in $\F_p$, where $p$ is an odd prime, and $K=\log_2p-M\ge9$, then
\begin{equation}\label{eq:main-prime}
\begin{aligned}
H(X+Y)\ge{}&\frac{H(X)+H(Y)}2+\frac12
-\frac{3\log_2e}{4(M-1)}
-\frac34\log_2\left(1+2^{-(M-1)/2}\right)\\
&-h_2\left(\frac4{K-1}\right)-\frac4{K-1}.
\end{aligned}
\end{equation}
\end{theorem}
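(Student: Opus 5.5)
We follow the route sketched in the abstract: reduce to a paired component, transfer the continuous entropy power inequality to it, and control the losses separately. The first step is a reduction to $\Z$ (or to $\F_p$). The supports of $X$ and $Y$ generate a countable torsion-free group. For finitely supported truncations this group is finitely generated and hence isomorphic to some $\Z^d$. A generic linear functional $\Z^d\to\Z$ is injective on the finite sumset of the supports, so it preserves $H(X)$, $H(Y)$ and $H(X+Y)$. Truncation followed by a limiting argument, using lower semicontinuity of entropy together with convergence of $H(X)$ and $H(Y)$ under truncation, then reduces the torsion-free case to integer-valued $X,Y$.

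\textbf{Regularization.} We write the law of $X$ as a mixture $P_X=(1-\delta)P_{X'}+\delta P_{X''}$. The component $X'$ has \emph{paired} point probabilities, meaning its mass function takes each value on an even number of points, up to a controlled defect. The key requirement is that the entropy discarded in the remainder is bounded by a function of $\delta$ alone, independent of the support of $X''$. Concretely, we aim for
\[
H(X)\le(1-\delta)H(X')+\delta\, c+h_2(\delta),
\]
where $c$ is bounded in terms of $\delta$ and $M$. We do this for both $X$ and $Y$. Since $H(X+Y)\ge H(X+Y\mid \text{mixture labels})\ge(1-\delta)^2H(X'+Y')$ and so on, it suffices to prove the sharp bound for the paired components and absorb the remaining terms. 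The losses should appear as the $\log_2(1+2^{-(M-1)/2})$ term. I expect the pairing to be obtained by sorting probabilities and merging adjacent equal-ish levels, where the unmatched mass is at most about $2^{-H/2}$.

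\textbf{EPI transfer.} For paired $X'$ and $Y'$, we use discrete rearrangement in the spirit of \cite{MWW2021}. Replacing each variable by its symmetric decreasing rearrangement preserves the individual entropies and does not increase $H(X'+Y')$. Next we add independent uniform noises $U,V$ on $[0,1)$, which gives $h(X'+U)=H(X')$ and $h(Y'+V)=H(Y')$. The continuous EPI together with AM--GM yields $h(X'+Y'+U+V)\ge (H(X')+H(Y'))/2+1/2$. To compare this with the discrete quantity, we use
\[
h(X'+Y'+U+V)\le H(X'+Y')+h(U+V\mid\cdot),
\]
where $U+V$ has a triangular density, so a naive comparison loses a constant. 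To avoid this loss, we use unimodality of the rearranged sum: for a unimodal mass function, smoothing by $U+V$ changes the entropy by at most $O(\max_k P(X'+Y'=k))$. This is where the mixture-entropy bound of \cite{Melbourne2022} enters, through a total-variation estimate between neighbouring shifts. Since the maximum probability is at most about $2^{-M}$ in a suitable sense, the loss is $O(1/M)$, which matches the $3\log_2 e/(4(M-1))$ term. I expect the main obstacle to be this step, that is, showing that the smoothing loss vanishes rather than staying at a constant such as $\log_2 e/2$. If necessary, I would first add lattice-scale averaging (convolving with uniform on $\{0,\dots,n-1\}$ and rescaling) to make the total-variation distances between unit shifts small.

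\textbf{Prime field.} In $\F_p$, we rearrange within $\F_p$ following \cite{MWW2021}, then lift to integers in an interval around $0$. The integer sum $\tilde S$ reduces to $X+Y$ modulo $p$. It remains to bound $H(\tilde S)-H(\tilde S\bmod p)\le H(\text{wrap index})$. Since $K=\log_2p-M\ge 9$, the rearranged mass concentrates on an interval much shorter than $p$. A Markov/Chebyshev-type tail bound in entropy terms shows that the wrap indicator is nonzero with probability at most $4/(K-1)$. This gives the extra term $h_2(4/(K-1))+4/(K-1)$. The last summand arises because the wrap index, conditioned on being nonzero, has entropy at most about one bit per unit of probability, since it takes at most two nonzero values.
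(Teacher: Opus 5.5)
Your outline uses the right tools: pairing, rearrangement, uniform dithering with the mixture bound, and a carry variable modulo $p$. But both places where the error is supposed to become small rest on claims that fail for general high-entropy laws, and the mechanism that actually makes them small is missing.

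\textbf{Missing case split.} High Shannon entropy does not bound the largest point probability. An atom of mass $1/2$ together with mass $1/2$ spread uniformly over $2^{2M}$ points has entropy $M+1$. So the maximum probability need not be ``about $2^{-M}$'', and the smoothing loss $\frac{\log_2e}{2}d_{\mathrm{TV}}(s,\tau s)\le\frac{\log_2e}{2}a$ can stay of constant size. No lattice-scale averaging fixes this, because the loss really is governed by the largest atom $a$. (Even a bound $2^{-M}$ would give an exponentially small error, not $O(1/M)$.) Likewise, $K$ large does not concentrate mass centrally. Mass $1/2$ on one point plus mass $1/2$ uniform on $\F_p$ gives $K\approx\frac12\log_2p$, yet leaves mass about $1/4$ outside the roughly $p/2$ most likely values, so the carry can have constant probability.

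The missing idea is to split on the increment itself. If $H(X+Y)-\frac{H(X)+H(Y)}2\ge\frac12$ there is nothing to prove; otherwise that smallness controls the distributions. With $M=H(X)\ge H(Y)$ it forces $H(Y)>M-1$. Conditioning on whether $X$ equals a most likely value (of mass $a$) gives $H(X+Y)\ge M+aH(Y)-h_2(a)$, hence $aH(Y)<\frac32$ and $a\le\frac3{2(M-1)}$. In $\F_p$, condition on $Y$ lying outside its $m=2\lfloor(p-1)/4\rfloor+1$ most likely values; there every conditional probability is at most $2/(pt_Y)$. This gives $1>H(X+Y)-H(X)\ge t_Y(K-1+\log_2t_Y)$, so $t_Y<\frac2{K-1}$, and symmetrically $t_X<\frac2{K-1}$. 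No Markov- or Chebyshev-type estimate from entropy alone can replace these bounds.

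\textbf{Regularization mechanism.} The unmatched mass need not be small. Take $q_1=q_2=0.3$ with the remaining $0.4$ spread evenly over $N$ points: the entropy grows like $0.4\log_2N$, yet any $\triangle$-regular mixture component must discard mass about $0.3$. So you cannot rely on $\delta$ being small to absorb the other branches. The term $\frac34\log_2(1+2^{-(M-1)/2})$ does not come from a small unmatched mass.

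In the paper only $Y$ is paired, since one $\triangle$-regular summand suffices for the rearrangement inequality. The pairing retains $q_1,q_3,q_3,q_5,q_5,\ldots$. The removed masses $d_j=q_{2j}-q_{2j+1}$ satisfy $\sum_j jd_j\le\sum_j q_{2j}\le\frac12$. The geometric maximum-entropy bound then gives $rH(Y_1)\le\frac12h_2(2r)$, independently of the support and of $M$; your $c$, which may depend on $M$, would not suffice.

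The residual branch is kept rather than discarded: $H(X+Y_1)\ge M$. After substituting $(1-r)H(Y_0)\ge H(Y)-h_2(r)-\frac12h_2(2r)$, this branch contributes a gain $\frac{r(M-1)}2$ that pays for $\frac12h_2(r)+\frac14h_2(2r)$. The inequality $h_2(u)-cu\le\log_2(1+2^{-c})$ with $c=\frac{M-1}2$ then produces the stated term.

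In the prime case you also need the rank monotonicity of this pairing, which bounds the tail of $Y_0$ by $t_Y/(1-r)$. You must keep the branch weight, giving $(1-r)\theta\le\frac4{K-1}$ before applying concavity of $h_2$.

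\textbf{Truncation step.} Lower semicontinuity points the wrong way here. Use instead $H(X+Y)\ge\Prob(E_n)H(X_n+Y_n)$, obtained by conditioning on the truncation event $E_n$.
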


Thus the error vanishes as $M\to\infty$ and, in the prime-field case, $K\to\infty$. The conjecture of Gavalakis, Goh and Kontoyiannis \cite{GGK2026} imposes conditions on just one summand. It follows in the form below, since an entropy difference of at least one bit already gives the required bound.

\begin{corollary}\label{cor:ggk}
For every $\varepsilon>0$ there are finite constants $h_\varepsilon,K_\varepsilon$ with the following property. Let $X,Y$ be independent discrete random variables with finite entropies and $H(X)\ge h_\varepsilon$. If they take values in a torsion-free abelian group, or in $\F_p$ with $H(X)\le\log_2p-K_\varepsilon$, then
\[
H(X+Y)\ge\frac{H(X)+H(Y)}2+\frac12-\varepsilon.
\]
\end{corollary}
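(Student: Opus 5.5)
The corollary follows from \cref{thm:main} by splitting according to the entropy gap between the summands. Fix $\varepsilon>0$. Two elementary facts will be used throughout. First, $H(X+Y)\ge\max\{H(X),H(Y)\}$ holds in any abelian group, since $H(X+Y)\ge H(X+Y\mid Y)=H(X)$. Second, the error terms in \eqref{eq:main-tf} and \eqref{eq:main-prime} all tend to $0$. Indeed, $\frac{3\log_2 e}{4(M-1)}\to0$ and $\frac34\log_2(1+2^{-(M-1)/2})\to0$ as $M\to\infty$, while $h_2(4/(K-1))+4/(K-1)\to0$ as $K\to\infty$. So I will choose $h_\varepsilon$ large enough that the first two terms total at most $\varepsilon/2$ whenever $M\ge h_\varepsilon$. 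I will choose $K_\varepsilon\ge 9$ large enough that the last two terms total at most $\varepsilon/2$ whenever $K\ge K_\varepsilon-1$; the reason for the shift by one appears below.

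\emph{Case 1: $|H(X)-H(Y)|\ge1$.} Say $H(X)\ge H(Y)+1$; the other order is symmetric. Then
\[
H(X+Y)\ge H(X)=\frac{H(X)+H(X)}2\ge\frac{H(X)+H(Y)+1}2,
\]
which already gives the claim with no error term, in any group. This is the remark preceding the corollary.

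\emph{Case 2: $|H(X)-H(Y)|<1$.} Here $M=\max\{H(X),H(Y)\}\ge H(X)\ge h_\varepsilon$, and we may take $h_\varepsilon>1$. In the torsion-free case, \eqref{eq:main-tf} applies directly and the loss is at most $\varepsilon/2$. In the $\F_p$ case the hypothesis controls only $H(X)$, but $M<H(X)+1\le\log_2p-K_\varepsilon+1$. Hence $K=\log_2p-M>K_\varepsilon-1$, and the shift in the choice of $K_\varepsilon$ is exactly what absorbs this one bit. Then \eqref{eq:main-prime} applies, since $K\ge9$ by that choice, and the total loss is at most $\varepsilon$. The functions involved are monotone in the relevant ranges, so the explicit thresholds can be read off. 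This transfer of the deficit condition from $X$ to $M$ is the only point needing care; it is why the bounded entropy gap matters beyond Case 1.
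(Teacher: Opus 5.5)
Your argument is correct and is essentially the paper's proof. A gap of at least one bit is settled by $H(X+Y)\ge\max\{H(X),H(Y)\}$. Otherwise you apply \cref{thm:main}, moving the deficit hypothesis from $H(X)$ to $M$ at the cost of one bit; the paper splits off only the case $H(Y)\ge H(X)+1$, but this difference is immaterial.

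One small slip: $K>K_\varepsilon-1$ with $K_\varepsilon\ge9$ gives only $K>8$. You should require $K_\varepsilon\ge10$ so that the hypothesis $K\ge9$ of \eqref{eq:main-prime} is guaranteed.
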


The constant is optimal already for binomial distributions. For independent $X_n,Y_n$ with distribution $\mathrm{Binomial}(n,1/2)$, the entropy asymptotic in \cite{GK2024} gives
\[
H(X_n)=\frac12\log_2\left(\frac{\pi e n}{2}\right)+o(1),
\qquad H(X_n+Y_n)-H(X_n)=\frac12+o(1).
\]
The same example works in $\F_{p_n}$ for primes $p_n>2n$: reduction modulo $p_n$ preserves all three entropies, and $\log_2p_n-H(X_n)\to\infty$. For general finite fields, proper additive subspaces obstruct a positive universal increment even for full-support distributions; we give the short construction in \cref{sec:discussion}.

\subsection{Technique overview}

The continuous entropy power inequality supplies the half-bit gain, but applying it to arbitrary discrete distributions requires two reductions. High entropy alone gives no useful control on the largest point probability or the shape of a distribution. However, if the sum entropy is less than half a bit above the average input entropy, conditioning on a most likely value forces the largest point probability of the higher-entropy summand to be $O(M^{-1})$. This is precisely the regime in which a discrete-to-continuous comparison has a small error.

The main difficulty is to use rearrangement without losing the half-bit gain. We extract from the other summand a component whose ordered probabilities occur in equal pairs after the first entry. If the remainder has mass $r$, its entropy contribution is at most $h_2(2r)/2$, independently of the support size. The remainder is still useful: on that branch the sum retains the entire entropy of the higher-entropy input. This contribution offsets the cost of the decomposition, leaving a net loss of order $2^{-M/2}$. The rearrangement inequality of Madiman, Wang and Woo then reduces the retained branch to centered integer distributions without further loss. Adding independent uniform noise on $[0,1]$ preserves each input entropy as differential entropy. The mixture bound of Melbourne et al. controls the cost of removing the noise by the largest point probability, giving the remaining $O(M^{-1})$ error.

For prime cyclic groups, the integer sum must also be reduced modulo $p$. Under the same small-increment assumption, the entropy deficit $K=\log_2p-M$ bounds the mass outside the most probable half of the values by $O(K^{-1})$. Centered rearrangement turns this rank bound into a bound on the probability of a modular carry. Its entropy contributes an additional $O((\log K)/K)$ loss. Section~\ref{sec:torsion-free} proves the regularization and entropy transfer, including the passage to torsion-free groups. Section~\ref{sec:prime} controls this final modular loss.Concrete choices of the thresholds in Corollary~\ref{cor:ggk} are recorded in Appendix~\ref{app:thresholds}.

\section{Preliminaries}\label{sec:preliminaries}

All discrete random variables below have finite Shannon entropy. For a probability mass function $q$, write $H(q)=-\sum_xq(x)\log_2q(x)$ and $H(X)=H(q)$ when $X$ has mass function $q$. For a density $f$ on $\R$, write $h(f)=-\int_{\R}f(x)\log_2f(x)\,dx$, and use $h(X)$ when $X$ has density $f$. We write $\mathcal L(X)$ for the distribution of $X$, and $\Prob$ and $\E$ for probability and expectation. An atom is a value of positive probability; its mass is that point probability. The support consists of all atoms. We use the usual joint and conditional entropies $H(X,Y)$ and $H(X\mid Y)$, and write $H(X\mid E)$ for conditioning on an event $E$ of positive probability. For independent variables in an abelian group,
\begin{equation}\label{eq:addmon}
H(X+Y)\ge\max\{H(X),H(Y)\},
\end{equation}
because $H(X+Y\mid Y)=H(X)$ and conditioning cannot increase entropy.

For mass functions $q,s$ on a common countable set, their total variation distance is $d_{\mathrm{TV}}(q,s)=\frac12\sum_x|q(x)-s(x)|$. The two-component mixture bound of Melbourne et al.\ \cite{Melbourne2022} is
\begin{equation}\label{eq:mixture-tv}
H(tq+(1-t)s)-tH(q)-(1-t)H(s)
\le h_2(t)d_{\mathrm{TV}}(q,s),\qquad 0\le t\le1.
\end{equation}
For completeness, put $d=d_{\mathrm{TV}}(q,s)$. When $0<d<1$, the two laws share a common component of mass $1-d$, with normalized mass function $\min\{q,s\}/(1-d)$. Choose between $q$ and $s$ with probabilities $t,1-t$, and reveal whether the sample came from this common component or from the residual component. The former gives no information about the choice; the latter gives at most $h_2(t)$ bits and occurs with probability $d$. The left side of \eqref{eq:mixture-tv} is exactly the information about that choice contained in the sample. The endpoints $d=0,1$ follow directly. We also use
\begin{equation}\label{eq:bern-var}
h_2(u)-cu\le\log_2(1+2^{-c}),\qquad c\ge0,\quad 0\le u\le1,
\end{equation}
which follows by maximizing over $u$.

For a finitely supported distribution, arrange its probabilities $a_1\ge a_2\ge\cdots$ in the order $0,1,-1,2,-2,\ldots$, padding with zeros. Write $A^\circ$ for a variable with this centered rearrangement of the law of $A$. The rearrangement preserves entropy and is unimodal: its probabilities are nondecreasing up to a mode and nonincreasing thereafter. On $\F_p$ we use the same order on the centered representatives $-(p-1)/2,\ldots,(p-1)/2$. Following Madiman, Wang and Woo, a distribution is $\triangle$-regular if $a_2=a_3$, $a_4=a_5$, and so on. Its centered rearrangement is symmetric about zero. All rearranged variables appearing together in a sum are taken to be independent.

The rearrangement inequality in \cite{MWW2021} implies that, if $B$ is $\triangle$-regular, then
\begin{equation}\label{eq:mww}
H(A+B)\ge H(A^\circ+B^\circ)
\end{equation}
for variables on $\F_p$, with both sums taken modulo $p$. It also holds for finitely supported integer variables: choose a prime large enough that reduction is injective on the supports and sumsets before and after rearrangement, and apply the prime-cyclic comparison. Here the sumset of two sets consists of all sums with one element from each set.

\section{Regularization and the torsion-free bound}\label{sec:torsion-free}

\subsection{A paired component with controlled residual entropy}

The regular components in \cite{MWW2021} suggest extracting a paired part before rearranging. The estimate needed here concerns the entropy carried by what is removed. A bound in terms of the support size would not suffice, since a small mass can be spread over arbitrarily many values. The following construction controls that entropy through the ranks of the removed probabilities.

\begin{lemma}\label{lem:pair}
Let $Y$ be finitely supported. There exist $0\le r\le1/2$ and variables $Y_0,Y_1$ such that
\[
\mathcal L(Y)=(1-r)\mathcal L(Y_0)+r\mathcal L(Y_1),
\qquad rH(Y_1)\le\frac12h_2(2r),
\]
where $Y_0$ is $\triangle$-regular. In particular,
\begin{equation}\label{eq:regularization-entropy}
(1-r)H(Y_0)\ge H(Y)-h_2(r)-\frac12h_2(2r).
\end{equation}
Moreover, for every positive integer $m$, the mass of $Y_0$ outside its $m$ largest probabilities, multiplied by $1-r$, is at most the mass of $Y$ outside its $m$ largest probabilities.
\end{lemma}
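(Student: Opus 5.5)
The plan is to build the paired component directly from the ordered probabilities of $Y$, by lowering the larger member of each consecutive pair to the smaller one. Let $a_1\ge a_2\ge\cdots$ be the probabilities of $Y$, padded with zeros, and fix an atom $y_i$ of $Y$ carrying $a_i$, breaking ties arbitrarily. Set
\[
d_k=a_{2k}-a_{2k+1}\ge0,\qquad r=\sum_{k\ge1}d_k .
\]
Define the sub-probability $c$ by $c(y_1)=a_1$, $c(y_{2k})=c(y_{2k+1})=a_{2k+1}$, and $c=0$ elsewhere. Then $c\le\mathcal L(Y)$ pointwise, and $\mathcal L(Y)-c$ places mass $d_k$ at $y_{2k}$. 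I take
\[
\mathcal L(Y_0)=\frac{c}{1-r},\qquad \mathcal L(Y_1)=\frac{\mathcal L(Y)-c}{r},
\]
with $Y_1$ arbitrary if $r=0$. The mixture identity then holds by construction.

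The first properties are routine.
\begin{itemize}
\item The values $a_1,a_3,a_3,a_5,a_5,\dots$ are nonincreasing, so this is exactly the ordered list of $Y_0$'s probabilities (after normalization). Hence $Y_0$ is $\triangle$-regular.
\item Telescoping gives $r=a_2-a_3+a_4-a_5+\cdots\le a_2\le1/2$, which also guarantees $1-r>0$.
\item The $i$-th largest probability of $(1-r)\mathcal L(Y_0)$ is at most $a_i$, since $a_{2k+1}\le a_{2k}$. Summing over $i>m$ gives the tail statement in the ``Moreover'' part.
\item Inequality \eqref{eq:regularization-entropy} follows from the chain-rule bound for mixtures,
\[
H(Y)\le h_2(r)+(1-r)H(Y_0)+rH(Y_1),
\]
together with $rH(Y_1)\le\frac12 h_2(2r)$.
\end{itemize}

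The main step is the bound $rH(Y_1)\le\frac12h_2(2r)$, which must not depend on the support size. The idea is to control the mean rank. Let $T_k=\sum_{j\ge k}d_j$. Telescoping again gives $T_k\le a_{2k}$. Since $2a_{2k}\le a_{2k-1}+a_{2k}$, we also have $\sum_k a_{2k}\le\frac12$, and therefore
\[
\sum_{k\ge1}k\,d_k=\sum_{k\ge1}T_k\le\frac12 .
\]
Let $K$ be the positive integer index with $\Prob(K=k)=d_k/r$. Then $H(Y_1)=H(K)$ and $\E K\le 1/(2r)$. Among positive-integer variables with mean at most $\mu\ge1$, entropy is maximized by the geometric law, with value $\mu\,h_2(1/\mu)$, and this value is nondecreasing in $\mu$. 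Taking $\mu=1/(2r)$, which is at least $1$ because $r\le1/2$, gives
\[
rH(Y_1)\le r\cdot\frac{h_2(2r)}{2r}=\frac12h_2(2r).
\]
I expect this maximum-entropy step to be the crux. I would verify it through the Gibbs inequality against the geometric reference law with success probability $1/\mu$, and treat the degenerate cases $r=0$ and $r=1/2$ separately.
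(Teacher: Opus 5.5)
Your proposal is correct and follows the paper's proof essentially step for step. It uses the same pairing construction $a_1,a_3,a_3,a_5,\dots$, the same mean-rank bound $r\,\E K\le\sum_k a_{2k}\le\frac12$, the same geometric maximum-entropy step, and the same mixture chain rule; you reach the mean-rank bound by Abel summation with $T_k\le a_{2k}$, where the paper compares with $\sum_j j(q_{2j}-q_{2j+2})$, and your remark that $\mu\,h_2(1/\mu)$ is nondecreasing makes explicit why the inequality $\E K\le 1/(2r)$ suffices, a detail the paper leaves implicit.
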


\begin{proof}
Write the probabilities of $Y$ as $q_1\ge q_2\ge\cdots$, padded with zeros, and retain the sequence $q_1,q_3,q_3,q_5,q_5,\ldots$ at the original values. The removed masses are $d_j=q_{2j}-q_{2j+1}$ for $j\ge1$, with total $r=\sum_{j\ge1}d_j\le\sum_{j\ge1}q_{2j}\le1/2$. Dividing the retained masses by $1-r$ defines $Y_0$. If $r>0$, assign probability $d_j/r$ to the original value at rank $2j$ to define $Y_1$. The retained sequence is still nonincreasing and is bounded term by term by the original sequence, which proves both its paired form and the assertion about ranks. When $r=0$, take $Y_0$ to have the law of $Y$ and $Y_1$ deterministic.

Suppose $r>0$, and let $J$ have probabilities $\Prob(J=j)=d_j/r$ on the positive integers. The essential estimate is
\[
\begin{aligned}
r\E J
&=\sum_{j\ge1}j(q_{2j}-q_{2j+1})
\le\sum_{j\ge1}j(q_{2j}-q_{2j+2})\\
&=\sum_{j\ge1}q_{2j}\le\frac12.
\end{aligned}
\]
The geometric distribution maximizes entropy on the positive integers under a mean constraint. Thus $\E J\le1/(2r)$ gives $H(J)\le h_2(2r)/(2r)$. This also holds at $r=1/2$, when the mean bound forces $J=1$. Since $Y_1$ is determined by $J$, we obtain $rH(Y_1)\le h_2(2r)/2$. Finally, revealing the mixture component gives $H(Y)\le h_2(r)+(1-r)H(Y_0)+rH(Y_1)$, which proves \eqref{eq:regularization-entropy}.
\end{proof}

\subsection{Entropy transfer}

Uniform perturbation converts the entropy of an integer-valued input exactly into differential entropy. For a sum, the perturbation mixes its mass function with a unit shift. The mixture inequality \eqref{eq:mixture-tv} therefore gives a quantitative form of the perturbation argument in \cite{Nekouei2019,Gavalakis2024}.

\begin{lemma}\label{lem:dither}
Let $A,B$ be independent finitely supported integer-valued variables, with $A$ unimodal, and put $a=\max_x\Prob(A=x)$. Then
\begin{equation}\label{eq:dither}
H(A+B)\ge\frac12\log_2\left(2^{2H(A)}+2^{2H(B)}\right)
-\frac{\log_2e}{2}a.
\end{equation}
\end{lemma}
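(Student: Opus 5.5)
Let $U,V$ be independent uniform variables on $[0,1)$, independent of $A,B$. Since $A$ is integer-valued, $A+U$ has density equal to $\Prob(A=\lfloor x\rfloor)$ on $[\lfloor x\rfloor,\lfloor x\rfloor+1)$. Hence $h(A+U)=H(A)$ exactly, and likewise $h(B+V)=H(B)$. The continuous entropy power inequality of Shannon and Stam, applied to the independent pair $A+U$ and $B+V$, gives
\[
h(A+B+U+V)\ge\frac12\log_2\left(2^{2H(A)}+2^{2H(B)}\right).
\]
It therefore suffices to prove $h(A+B+U+V)\le H(A+B)+\frac{\log_2 e}{2}a$.

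The plan is to decompose the noise. Write $S=A+B$, which is integer-valued, and $W=U+V\in[0,2)$. Condition on the fractional part $\theta=\{W\}$, which is uniform on $[0,1)$ because $U+V$ modulo $1$ is uniform. Given $\theta$, the integer part $\lfloor W\rfloor$ is $1$ with probability $\theta$ and $0$ otherwise, independently of $S$. So $S+W=(S+\lfloor W\rfloor)+\theta$. Since $h(X+\theta)\le H(X)$ for an integer-valued $X$ and independent uniform $\theta$, and since conditioning on $\theta$ only increases entropy relative to averaging, we get
\[
h(S+W)\le\int_0^1 H\bigl(S+\xi_t\bigr)\,dt ,
\]
where $\xi_t$ is Bernoulli$(t)$ and independent of $S$. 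The law of $S+\xi_t$ is the mixture $(1-t)q+t\,q(\cdot-1)$, where $q$ is the law of $S$. By \eqref{eq:mixture-tv},
\[
H(S+\xi_t)\le H(S)+h_2(t)\,d_{\mathrm{TV}}\bigl(q,q(\cdot-1)\bigr).
\]
Also $\int_0^1h_2(t)\,dt=\frac{\log_2 e}{2}$. It remains to show $d_{\mathrm{TV}}(q,q(\cdot-1))\le a$.

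This total variation bound is the key step, and it is where unimodality of $A$ enters. By convexity of total variation under mixing over $B$, and because $B$ shifts $A$ rigidly,
\[
d_{\mathrm{TV}}\bigl(\mathcal L(A+B),\mathcal L(A+B+1)\bigr)\le d_{\mathrm{TV}}\bigl(\mathcal L(A),\mathcal L(A+1)\bigr)=\tfrac12\sum_x|p(x)-p(x-1)|,
\]
where $p$ is the law of $A$. For a unimodal $p$ with finite support, this sum telescopes: the total increase up to the mode plus the total decrease after it equals $2\max p=2a$. The bound $d_{\mathrm{TV}}\le a$ follows.

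I expect the main care to be needed in the conditioning step. One must check that $\theta$ is uniform and independent of the pair $(S,\lfloor W\rfloor)$ in the needed sense, or equivalently that $\lfloor W\rfloor$ given $\theta$ is Bernoulli$(\theta)$ and independent of $S$. One must also justify $h(S+W)\le h(S+W\mid\theta)=\int_0^1 h(S+\xi_t+t)\,dt$, where $h(S+\xi_t+t)=H(S+\xi_t)$. Both facts are direct computations from the joint density of $(U,V)$. Finite support guarantees that all entropies are finite.
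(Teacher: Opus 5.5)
Your proposal is correct and is essentially the paper's proof. Both use uniform dithering, then the identity $h(S+U+V)=\int_0^1H\bigl(ts+(1-t)\tau s\bigr)\,dt$ coming from the density $ts_k+(1-t)s_{k-1}$ of $S+U+V$ at $k+t$. They then apply the mixture bound \eqref{eq:mixture-tv} with $\int_0^1h_2=\frac{\log_2e}{2}$, contraction of total variation, the unimodal telescoping $d_{\mathrm{TV}}(\mathcal L(A),\mathcal L(A+1))=a$, and the entropy power inequality.

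Your conditioning step needs two small corrections. First, given $\{U+V\}=t$ the carry equals $1$ with probability $1-t$, not $t$; this is harmless since $h_2(t)=h_2(1-t)$. Second, the bound $h(S+W)\le\int_0^1H(S+\xi_t)\,dt$ does not come from conditioning increasing entropy, since conditioning decreases it. It is an exact equality: the fractional part of $S+W$ is uniform, and the density of $S+W$ at $k+t$ is precisely $\Prob(S+\xi_t=k)$ for the correctly parametrized $\xi_t$.
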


\begin{proof}
Let $U,V$ be independent uniforms on $[0,1]$, independent also of $A,B$. The unit intervals carrying the densities of their integer translates have disjoint interiors, so $h(A+U)=H(A)$ and $h(B+V)=H(B)$. Put $S=A+B$, with mass function $s=(s_k)_{k\in\Z}$, where $s_k=\Prob(S=k)$, and let $\tau s$ be its unit shift, $(\tau s)_k=s_{k-1}$. The density of $S+U+V$ at $k+t$, for $0<t<1$, is $ts_k+(1-t)s_{k-1}$. Consequently, \eqref{eq:mixture-tv} gives
\[
\begin{aligned}
h(S+U+V)-H(S)
&=\int_0^1\left[H(ts+(1-t)\tau s)-H(s)\right]dt\\
&\le d_{\mathrm{TV}}(s,\tau s)\int_0^1h_2(t)\,dt
=\frac{\log_2e}{2}d_{\mathrm{TV}}(s,\tau s).
\end{aligned}
\]
Adding an independent variable contracts total variation, so $d_{\mathrm{TV}}(s,\tau s)\le d_{\mathrm{TV}}(\mathcal L(A),\mathcal L(A+1))=a$. The last equality follows by telescoping on either side of a mode; the same overlap identity is used in \cite{Gavalakis2024}. Applying the continuous entropy power inequality to $A+U$ and $B+V$ proves \eqref{eq:dither}.
\end{proof}

The next estimate combines regularization and rearrangement with this comparison. We state it for integers and prime cyclic groups together, retaining the entropy loss under modular reduction for use in Section~\ref{sec:prime}.

\begin{lemma}\label{lem:transfer}
Let $X,Y$ be independent finitely supported variables in $\Z$ or $\F_p$, where $p$ is an odd prime. Suppose $H(X)=M\ge H(Y)$, $M>1$, and
\begin{equation}\label{eq:small-increment}
H(X+Y)-\frac{H(X)+H(Y)}2<\frac12.
\end{equation}
Apply \cref{lem:pair} to $Y$, and let $A=X^\circ$, $B=Y_0^\circ$, regarded as integer-valued variables. Put $w=0$ in the integer case. In the prime-field case, let $T_p$ be the centered reduction of $A+B$ modulo $p$ and put $w=H(A+B)-H(T_p)$. Then
\begin{equation}\label{eq:transfer}
\begin{aligned}
H(X+Y)-\frac{H(X)+H(Y)}2
\ge{}&\frac12-\frac{3\log_2e}{4(M-1)}
-\frac34\log_2\left(1+2^{-(M-1)/2}\right)\\
&-(1-r)w.
\end{aligned}
\end{equation}
\end{lemma}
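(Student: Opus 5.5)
The plan is to realize $Y$ as the mixture from \cref{lem:pair}, condition on the mixture component, and handle the two branches separately. The $Y_0$ branch is rearranged and then dithered, so that the continuous entropy power inequality applies. The $Y_1$ branch only needs \eqref{eq:addmon}. Before any of this, I would use \eqref{eq:small-increment} to show that the largest point probability $a$ of $X$ is $O(1/M)$; this is what makes the dithering error in \cref{lem:dither} small. Since $A=X^\circ$ has the same probabilities as $X$, $a$ is also the largest probability of $A$, and $A$ is unimodal.

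\emph{Step 1: bound on $a$.} Split $\mathcal L(X)$ into the point mass at a most likely value $x_0$, with weight $a$, and a remainder $X'$, with weight $1-a$. Then $M\le h_2(a)+(1-a)H(X')$. Conditioning on which component $X$ came from, and using \eqref{eq:addmon} on both branches, gives
\[
H(X+Y)\ge aH(Y)+(1-a)H(X'+Y)\ge aH(Y)+M-h_2(a).
\]
Put $D=M-H(Y)\in[0,M]$. Combined with \eqref{eq:small-increment} this yields $D(\tfrac12-a)+aM<\tfrac12+h_2(a)\le\tfrac32$.
\begin{itemize}
\item If $a\le1/2$, dropping the nonnegative first term gives $a<3/(2M)$.
\item If $a>1/2$, use $D\le M$ to get $\tfrac M2<\tfrac32$, so $M<3$. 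The claimed bound $a\le1\le 3/(2(M-1))$ then holds trivially.
\end{itemize}
In either case $a\le 3/(2(M-1))$, which is exactly the size needed for the term $\frac{3\log_2e}{4(M-1)}$.

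\emph{Step 2: decomposition and transfer.} Realize $Y$ as $Y_0$ with probability $1-r$ and $Y_1$ with probability $r$, independently of $X$. Conditioning on the component and applying \eqref{eq:addmon} to the $Y_1$ branch gives
\[
H(X+Y)\ge(1-r)H(X+Y_0)+rM.
\]
Since $Y_0$ is $\triangle$-regular, \eqref{eq:mww} gives $H(X+Y_0)\ge H(A+B)-w$. In the integer case the right side is $H(A+B)$ with $w=0$; in $\F_p$ it is $H(T_p)=H(A+B)-w$. Next apply \cref{lem:dither} with $A$ unimodal and $H(B)=H(Y_0)=:H_0$, then the arithmetic–geometric mean inequality:
\[
H(A+B)\ge\tfrac12\log_2\left(2^{2M}+2^{2H_0}\right)-\tfrac{\log_2e}{2}a\ge\tfrac{M+H_0}2+\tfrac12-\tfrac{\log_2 e}{2}a.
\]
Multiply by $1-r\le1$, insert $a\le3/(2(M-1))$, and bound $(1-r)H_0/2$ from below by \eqref{eq:regularization-entropy}. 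The deficit $H(X+Y)-\frac{M+H(Y)}2$ is then at least
\[
\tfrac12-\tfrac{3\log_2e}{4(M-1)}-(1-r)w+\Bigl[\tfrac{r(M-1)}2-\tfrac12h_2(r)-\tfrac14h_2(2r)\Bigr].
\]

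\emph{Step 3: the bracket.} I expect this to be the main obstacle, since the regularization cost $\tfrac12h_2(r)+\tfrac14h_2(2r)$ must be absorbed by the gain $r(M-1)/2$, which comes from the $Y_1$ branch retaining the full entropy $M$. The plan is to split the gain as $\tfrac{r(M-1)}4+\tfrac{r(M-1)}4$ and apply \eqref{eq:bern-var} twice:
\begin{itemize}
\item with $u=r$ and $c=(M-1)/2$, giving $\tfrac12\bigl(h_2(r)-\tfrac{M-1}2r\bigr)\le\tfrac12\log_2(1+2^{-(M-1)/2})$;
\item with $u=2r$ and $c=(M-1)/2$, giving $\tfrac14\bigl(h_2(2r)-\tfrac{M-1}2\cdot2r\bigr)\le\tfrac14\log_2(1+2^{-(M-1)/2})$.
\end{itemize}
Adding these shows the bracket is at least $-\tfrac34\log_2(1+2^{-(M-1)/2})$, which is \eqref{eq:transfer}. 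The points to check with care are that $r\le1/2$ keeps $2r\in[0,1]$ for \eqref{eq:bern-var}, and that the factor $1-r$ multiplying the dithering error is only ever used as $\le1$.
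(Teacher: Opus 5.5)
Your Steps 2 and 3 match the paper's argument and are correct: conditioning on the mixture component, rearrangement via \eqref{eq:mww}, \cref{lem:dither} followed by AM--GM, and then \eqref{eq:bern-var} at $r$ and $2r$ with $c=(M-1)/2$. The gap is in the second bullet of Step 1. From $a>1/2$ and $D\le M$ you correctly get $M<3$. But with $1<M<3$ this only gives $3/(2(M-1))>3/4$, not $3/(2(M-1))\ge 1$; the latter holds only when $M\le 5/2$. For $5/2<M<3$ your chain $a\le 1\le 3/(2(M-1))$ is false, so in that range you have not established $a\le 3/(2(M-1))$. That bound is exactly what produces the term $\frac{3\log_2 e}{4(M-1)}$ in \eqref{eq:transfer}.

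The cause is that you replaced the available constraint on $D$ by the weak bound $D\le M$. From \eqref{eq:addmon} and \eqref{eq:small-increment}, $M\le H(X+Y)<\frac{M+H(Y)}2+\frac12$, so $D<1$, i.e.\ $H(Y)>M-1$. With this, no case split is needed. Your inequality can be rewritten as $D(\frac12-a)+aM=aH(Y)+\frac D2<\frac32$, which gives $aH(Y)<\frac32$. Then $a(M-1)<aH(Y)<\frac32$ for every value of $a$. This is precisely how the paper argues. With this repair of Step 1, the rest of your proposal proves the lemma as stated.
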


\begin{proof}
By \eqref{eq:addmon} and \eqref{eq:small-increment}, we have $H(Y)>M-1$ and $H(X+Y)-M<1/2$. Choose a most likely value $x_0$ of $X$, with probability $a$. Conditioning on whether $X=x_0$, and using \eqref{eq:addmon} on the complementary branch, yields
\[
\begin{aligned}
H(X+Y)&\ge aH(Y)+(1-a)H(X\mid X\ne x_0)\\
&=M+aH(Y)-h_2(a).
\end{aligned}
\]
It follows that $aH(Y)<1/2+h_2(a)\le3/2$, and hence
\begin{equation}\label{eq:atom}
a\le\frac{3}{2(M-1)}.
\end{equation}
This conditioning argument is the large-point-probability mechanism used in \cite{HAT2014}.

Take the mixture components independently of $X$. Rearrangement \eqref{eq:mww} and \cref{lem:dither}, followed by the arithmetic--geometric mean inequality, give
\[
H(X+Y_0)\ge\frac{M+H(Y_0)}2+\frac12-\frac{\log_2e}{2}a-w.
\]
On the residual branch, $H(X+Y_1)\ge M$. Conditioning on the branch and substituting \eqref{eq:regularization-entropy} therefore gives
\[
\begin{aligned}
H(X+Y)-\frac{M+H(Y)}2
\ge{}&\frac12-\frac{\log_2e}{2}a+\frac{r(M-1)}2\\
&-\frac12h_2(r)-\frac14h_2(2r)-(1-r)w.
\end{aligned}
\]
Here we dropped the nonnegative term $ra\log_2e/2$. The positive term proportional to $rM$ compensates for the entropy cost of the decomposition.

Put $c=(M-1)/2$. Applying \eqref{eq:bern-var} at $r$ and $2r$ gives
\[
\begin{aligned}
\frac12h_2(r)+\frac14h_2(2r)-cr
&=\frac12\bigl[h_2(r)-cr\bigr]
+\frac14\bigl[h_2(2r)-2cr\bigr]\\
&\le\frac34\log_2(1+2^{-c}).
\end{aligned}
\]
Together with \eqref{eq:atom}, this proves \eqref{eq:transfer}.
\end{proof}

\begin{proof}[Proof of \cref{thm:main} in torsion-free groups]
For finitely supported integer variables, either the increment is at least $1/2$, or \cref{lem:transfer} applies with $w=0$ after relabeling. This proves \eqref{eq:main-tf} on $\Z$. Finite supports in a torsion-free abelian group generate a subgroup isomorphic to $\Z^d$ for some nonnegative integer $d$. Let $S$ be the union of the two supports and their sumset. An integer linear functional $\phi:\Z^d\to\Z$ can be chosen injective on $S$: its coefficient vector need only avoid the finitely many hyperplanes orthogonal to nonzero differences of elements of $S$. Since $\phi$ is additive and injective on these sets, it preserves $H(X)$, $H(Y)$ and $H(X+Y)$. The finite-support result follows.

For countable supports, choose increasing finite sets $A_n,B_n$ whose probabilities tend to one, and let $X_n,Y_n$ be independent variables with the conditional laws of $X$ on $A_n$ and $Y$ on $B_n$. Finite entropy implies $H(X_n)\to H(X)$ and $H(Y_n)\to H(Y)$. The event $E_n=\{X\in A_n,Y\in B_n\}$ preserves independence under conditioning, and conditioning on its indicator gives $H(X+Y)\ge\Prob(E_n)H(X_n+Y_n)$. Apply the finite-support bound and let $n\to\infty$; the right side of \eqref{eq:main-tf} is continuous for $M>1$.
\end{proof}

\section{Controlling modular reduction}\label{sec:prime}

The remaining loss in \cref{lem:transfer} comes from identifying integer sums that differ by $p$. Centered rearrangement makes this possible only when one input lies outside a central interval. We control that probability through ranks in the original distributions, using the entropy deficit from uniformity.

\begin{lemma}\label{lem:tail}
Let $p$ be an odd prime and let $X,Y$ be independent $\F_p$-valued variables satisfying \eqref{eq:small-increment}, with $H(X)=M\ge H(Y)$ and $K=\log_2p-M>1$. Put $L=\lfloor(p-1)/4\rfloor$ and $m=2L+1$. If $t_X,t_Y$ are the masses outside the $m$ largest probabilities of $X,Y$, respectively, then
\[
t_X\le\frac2{K-1},\qquad t_Y\le\frac2{K-1}.
\]
\end{lemma}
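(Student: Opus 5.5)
The plan is to bound the tails through the mutual information between the sum and one summand, and to use the entropy deficit $K$ to show that a large tail forces that mutual information above $1/2$. Write $Z=X+Y$. Because $H(Z\mid Y)=H(X)=M$ and $H(Z\mid X)=H(Y)$, condition \eqref{eq:small-increment} together with $H(Y)\le M$ gives
\[
I(Z;Y)=H(Z)-M<\tfrac12,\qquad I(Z;X)=H(Z)-H(Y)<\tfrac12+\tfrac{M-H(Y)}2\le 1,
\]
where the last step uses $H(Y)>M-1$, obtained from \eqref{eq:addmon} as in the proof of \cref{lem:transfer}. Both summands therefore carry little information about the sum, and the argument will be symmetric in $X$ and $Y$ up to constants. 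I treat $t_X$; for $t_Y$ the roles are swapped, using $I(Z;X)<1$.

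The first step is a rank transfer. Fix $y$ and any set $S\subseteq\F_p$ with $|S|=m$. Then $\Prob(Z\in S\mid Y=y)=\Prob(X\in S-y)\le 1-t_X$, because no $m$-set carries more than the $m$ largest probabilities of $X$. Hence every conditional law $\mathcal L(Z\mid Y=y)$, and also $\mathcal L(Z)$, puts mass at least $t_X$ outside every set of $m$ values. Since $m\ge (p-1)/2$, every atom of any of these laws lying outside its own top $m$ values has mass at most $1/(m+1)\le 2/(p+1)$. Each such outside atom therefore contributes at least $\log_2p-1$ to $-\log_2$ of the probability.

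The second step turns this into a lower bound on $I(Z;Y)$, and I expect it to be the main obstacle. A bound on $H(Z)$ alone cannot work: the head part could have almost zero entropy, and the trivial estimate $t(\log_2p-1)\le H(Z)<\log_2p-K+\tfrac12$ only gives $t\lesssim 1-K/\log_2p$, which is useless. The deficit must enter through a comparison of the mixture $\mathcal L(Z)$ with its shifted components $\mathcal L(X+y)$. My first attempt is a decomposition $I(Z;Y)=\E_Y\,D\bigl(\mathcal L(X+Y\mid Y)\,\|\,\mathcal L(Z)\bigr)$, combined with a split of each divergence over head and tail atoms.

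The idea is that if $\mathcal L(Z)$ were close to every shift $\mathcal L(X+y)$, then $\mathcal L(X)$ would be nearly invariant under the shifts $y-y'$ for pairs in the support of $Y$. That support is large, since $H(Y)>M-1$. In $\F_p$, near-invariance under many shifts should force $X$ close to uniform, contradicting $K>1$. To quantify this, I would compare with a coarsening: condition on the indicator that $X$ lies in its top-$m$ set. With tail mass $t_X$, this forces the sum to spread over roughly $p/2$ additional values with atoms at most $2/p$, worth about $\log_2p-1\ge M+K-1$ bits on that branch, against only $M$ bits available on the other branch.

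This branch comparison should yield $I(Z;Y)\ge \tfrac{t_X}{2}(K-1)-\tfrac12$ or a similar inequality. With $I(Z;Y)<1/2$ it gives $t_X\le 2/(K-1)$, and the same argument with $I(Z;X)<1$ and a slightly adjusted bookkeeping gives the bound for $t_Y$. If the head branch cannot be controlled this directly, the fallback is to apply the grouping bound from the proof of \cref{lem:transfer} (conditioning on the branch, with \eqref{eq:addmon} on the complementary branch). That conditioning should be done on the tail event of $X$ rather than on a single atom.
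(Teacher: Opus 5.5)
Your second step, which you flag as the main obstacle, is never carried out. The inequality $I(Z;Y)\ge\frac{t_X}{2}(K-1)-\frac12$ is asserted (``or a similar inequality'') rather than derived. The mechanism you offer for it is a divergence decomposition plus approximate shift-invariance of $\mathcal L(X)$ forcing near-uniformity, and it is never made quantitative. So as written the proposal does not prove the lemma.

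The obstacle you anticipate is also not real. You worry that the head part may carry almost no entropy. But the head branch needs no entropy of its own. After conditioning on the tail indicator of one summand, \eqref{eq:addmon} bounds the head branch below by the full entropy of the \emph{other} summand, which is independent of that event. For the tail of $X$ this gives $H(Y)$, not the $M$ bits you mention. The rank transfer to the laws $\mathcal L(Z\mid Y=y)$ is not needed.

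Your one-sentence fallback is exactly the paper's proof, and it closes the argument in a few lines. Let $E$ be the event that $Y$ lies outside a fixed $m$-set carrying its $m$ largest probabilities, so $\Prob(E)=t_Y$. Each atom on $E$ has mass at most $1/(m+1)<2/p$, so $H(Y\mid E)\ge\log_2p-1+\log_2t_Y$. Conditioning on the indicator of $E$ gives
\[
H(X+Y)\ge(1-t_Y)H(X)+t_YH(Y\mid E)
\quad\Longrightarrow\quad
H(X+Y)-H(X)\ge t_Y\bigl(K-1+\log_2t_Y\bigr).
\]
The left side is below $1/2$ and $-t\log_2t<1$, so $t_Y(K-1)<2$. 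For $t_X$, condition on the analogous event for $X$. Now the head branch contributes $H(Y)$, so the quantity bounded is $H(X+Y)-H(Y)=I(Z;X)<1$, and $\log_2p-H(Y)\ge K$ gives $t_X(K-1+\log_2t_X)<1$. Thus the clean pairing is $t_Y$ with $I(Z;Y)$ and $t_X$ with $I(Z;X)$, the reverse of your assignment.

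Your pairing can be rescued. Using $H(Y)>M-1$ on the head branch together with $I(Z;Y)<1/2$ gives $t_X(K+\log_2t_X)<3/2$, hence $t_X(K-1)<\frac32+t_X\log_2\frac1{2t_X}<2$. That is not the inequality you wrote down, however, and the branch comparison does not produce your inequality when $t_X$ is small.
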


\begin{proof}
The small-increment assumption implies $|H(X)-H(Y)|<1$, $H(X+Y)-H(X)<1/2$ and $H(X+Y)-H(Y)<1$. Suppose $t_Y>0$, since otherwise its bound is immediate. If $q_1\ge\cdots\ge q_p$ are the ordered probabilities of $Y$, then $q_{m+1}\le1/(m+1)<2/p$. Thus every conditional probability on the tail is at most $2/(pt_Y)$, and
\[
H(Y\mid\mathrm{tail})\ge\log_2p-1+\log_2t_Y.
\]
Condition on this tail event. On its complement the sum entropy is at least $H(X)$, and on the tail it is at least $H(Y\mid\mathrm{tail})$. Therefore
\[
H(X+Y)-H(X)\ge t_Y\bigl(K-1+\log_2t_Y\bigr).
\]
The left side is less than $1$, while $-t\log_2t<1$ for $0\le t\le1$. Hence $t_Y(K-1)<2$. Interchanging $X,Y$ gives the other bound, since $\log_2p-H(Y)\ge K$ and the corresponding entropy increment is also less than $1$.
\end{proof}

\begin{proof}[Proof of \cref{thm:main} in prime cyclic groups]
Assume the increment is less than $1/2$ and relabel so that $H(X)=M\ge H(Y)$. Use $r,A,B,T_p,w$ from \cref{lem:transfer}. Since $A,B$ are centered representatives, there is a unique carry $W\in\{-1,0,1\}$ such that $A+B=T_p+pW$. The pair $(T_p,W)$ determines the integer sum and is determined by it. Consequently, with $\theta=\Prob(W\ne0)$,
\[
w=H(W\mid T_p)\le H(W)\le h_2(\theta)+\theta.
\]
No carry occurs if $|A|,|B|\le L$, where $L$ is as in \cref{lem:tail}. The first of these bounds fails with probability $t_X$. The rank estimate in \cref{lem:pair} bounds the probability of failure for $B$ by $t_Y/(1-r)$. Thus
\[
(1-r)\theta\le(1-r)t_X+t_Y\le\frac4{K-1}.
\]
The weight $1-r$ must be kept here, because it is the weight of this branch in the original sum. Concavity of $h_2$ gives $(1-r)h_2(\theta)\le h_2((1-r)\theta)$. Since $K\ge9$, the last displayed upper bound is at most $1/2$, where $h_2$ is increasing. We obtain
\[
(1-r)w\le h_2\left(\frac4{K-1}\right)+\frac4{K-1}.
\]
Substitution in \eqref{eq:transfer} proves \eqref{eq:main-prime}. If the increment is at least $1/2$, the conclusion follows directly.
\end{proof}

\begin{proof}[Proof of \cref{cor:ggk}]
Put $M=\max\{H(X),H(Y)\}$. If $H(Y)\ge H(X)+1$, \eqref{eq:addmon} gives the conclusion. Otherwise $H(X)\le M<H(X)+1$, so in the prime-field case $\log_2p-M>\log_2p-H(X)-1$. Choosing $h_\varepsilon$ and $K_\varepsilon$ large enough makes the errors in \cref{thm:main} at most $\varepsilon$.
\end{proof}

\section{Discussion}\label{sec:discussion}

The restriction to prime fields reflects a subgroup obstruction. To see that requiring full support does not remove it, regard the additive group of $\F_{p^2}$ as a two-dimensional vector space over $\F_p$. Choose complementary one-dimensional subspaces $V,W$, so every field element has a unique representation $v+w$. Let $U$ be uniform on $V$ and, independently, let $Z$ take the value zero with probability $1-\eta$ and each nonzero value of $W$ with probability $\eta/(p-1)$, where $\eta=(\log_2p)^{-2}$. For primes $p\ge3$, the variable $X=U+Z$ has full support and satisfies
\[
H(X)=\log_2p+H(Z),\qquad
H(Z)=h_2(\eta)+\eta\log_2(p-1)=o(1).
\]
If $Y$ is an independent copy of $X$ and $Z'$ an independent copy of $Z$, then $H(X+Y)=\log_2p+H(Z+Z')$. Hence $0\le H(X+Y)-H(X)\le H(Z)\to0$, although both $H(X)$ and $2\log_2p-H(X)$ diverge. A general finite-group analogue must therefore measure concentration on proper subgroup cosets, in addition to entropy and its deficit from full-group uniformity. The equality and stability results in \cite{GKSW2026} provide a related starting point for such a formulation.

The proof also leaves a stability question: what additive structure is forced when the high-entropy increment approaches $1/2$? The binomial example is one possibility, but the reduction to integers preserves only finitely many additive relations and does not identify a canonical geometry for near-extremizers. A stability theorem would have to account for these additive embeddings as well as the near-equality cases of the continuous entropy power inequality. The optimal rates of the errors in $M$ and $K$ remain another concrete question.

\section*{Acknowledgments}
The author used GPT-5.6 Sol to assist with the initial proof for integer-valued variables and with drafting the manuscript. The author posed the problem and guided the development of the argument, including its extension to torsion-free abelian groups and prime cyclic groups. AI assistance was also used to check calculations, examine possible gaps, and revise the exposition. The author reviewed the arguments and the final manuscript and takes responsibility for the mathematical claims, references, and presentation.

\appendix

\section{Explicit thresholds}\label{app:thresholds}

We record one convenient quantitative choice in Corollary~\ref{cor:ggk}. Assume \(0<\varepsilon<1/2\); for larger \(\varepsilon\), the conclusion follows directly from \(H(X+Y)\ge \max\{H(X),H(Y)\}\).

For the torsion-free estimate, put \(t=M-1\). If \(t\ge4\), then

$$
\log_2(1+2^{-t/2})
\le (\log_2 e)2^{-t/2}
\le \frac{\log_2 e}{t}.
$$

Hence the two error terms in \eqref{eq:main-tf} are together at most

$$
\frac{3\log_2 e}{2t}.
$$

In particular, \(t\ge 5/\varepsilon\) is sufficient to make their sum at most \(\varepsilon/2\).

In the prime cyclic case, write

$$
\delta=\frac4{K-1}.
$$

If

$$
K\ge 1+\frac{16}{\varepsilon}\log_2\frac8\varepsilon,
$$

then

$$
\delta\le
\frac{\varepsilon}{4\log_2(8/\varepsilon)}.
$$

Using

$$
h_2(\delta)+\delta
\le
\delta\log_2\frac{2e}{\delta},
$$

the additional prime-field error is at most \(\varepsilon/2\). Indeed, with \(b=\log_2(8/\varepsilon)>4\),

$$
\delta\log_2\frac{2e}{\delta}
\le
\frac{\varepsilon}{4b}
\log_2\frac{8eb}{\varepsilon}
\le
\frac{\varepsilon}{2}.
$$

Finally, in the nontrivial regime of Corollary~\ref{cor:ggk}, the larger input entropy differs from \(H(X)\) by less than one bit. Thus one may take, for example,

$$
h_\varepsilon
=
1+\frac5\varepsilon,
\qquad
K_\varepsilon
=
2+\frac{16}{\varepsilon}\log_2\frac8\varepsilon.
$$

\end{document}